\documentclass{aims}
\usepackage{amsmath}
  \usepackage{paralist}
  \usepackage{graphics} %% add this and next lines if pictures should be in esp format
  \usepackage{epsfig} %For pictures: screened artwork should be set up with an 85 or 100 line screen
 \usepackage[colorlinks=true]{hyperref}
   % Warning: when you first run your tex file, some errors might occur, please just
   % press enter key to end the compilation process,  then it will be fine if you run your tex file again.
   % Note that it is highly recommended by AIMS to use this package.
\hypersetup{urlcolor=blue, citecolor=red}

\usepackage{mathtools}

  \textheight=8.2 true in
   \textwidth=5.0 true in
    \topmargin 30pt
     \setcounter{page}{1}

% The next 5 line will be entered by an editorial staff.
%\def\currentvolume{X}
 %\def\currentissue{X}
  %\def\currentyear{200X}
  % \def\currentmonth{XX}
    %\def\ppages{X--XX}

 % Please minimize the usage of "newtheorem", "newcommand", and use
 % equation numbers only situation when they provide essential convenience
 % Try to avoid defining your own macros

\newtheorem{theorem}{Theorem}%[section]

\theoremstyle{definition}

%--------------------- Macros --------------
\newcommand{\R}{\mathbb{R}}

\newcommand{\bu}{\mathbf{u}}
\newcommand{\bA}{\mathbf{A}}
\newcommand{\bb}{\mathbf{b}}
\newcommand{\0}{\mathbf{0}}
\newcommand{\bS}{\mathbf{S}}
\newcommand{\bv}{\mathbf{v}}
\newcommand{\ba}{\mathbf{a}}
\newcommand{\bc}{\mathbf{c}}
\newcommand{\mopt}{\ensuremath{M_\text{opt}}}
\newcommand{\bx}{\mathbf{x}}
\newcommand{\by}{\mathbf{y}}
\def\bs{\mathbf{s}}

%% Place the running title of the paper with 40 letters or less in []
 %% and the full title of the paper in { }.
\title[Why Optimal States Recruit Fewer Reactions]
      {Why Optimal States Recruit Fewer Reactions in Metabolic Networks}

% Place all authors' names in [ ] shown as running head;
% No more than 40 letters. Leave { } empty
% Please use `and' to connect the last two names if applicable
\author[Joo Sang Lee, Takashi Nishikawa, Adilson E. Motter]{}

% It is required to enter MSC and Keywords.
\subjclass{Primary: 92C42; Secondary: 90C35.}
 \keywords{network modeling, cellular metabolism, duality principle, flux balance analysis, optimization.}

% Email address of each of all authors is required.
% You may list email addresses of all other authors, separately.
 \email{joosang@northwestern.edu}
 \email{tnishika@clarkson.edu}
 \email{motter@northwestern.edu}

% Put your short thanks below. For long thanks/acknowlegements,
%please go to the last acknowlegments section.
%\thanks{The first author is supported by NSF grant xx-xxxx}

\begin{document}
\maketitle

% Enter the first author's name and address:
\centerline{\scshape Joo Sang Lee }
\medskip
{\footnotesize
% please put the address of the first author
 \centerline{ Department of Physics \& Astronomy, Northwestern University, Evanston, IL 60208, USA}
} % Do not forget to end the {\footnotesize by the sign }

\medskip

\centerline{\scshape Takashi Nishikawa }
\medskip
{\footnotesize
 % please put the address of the second  and third author
 \centerline{ Department of Mathematics, Clarkson University,  Potsdam, NY 13699, USA}
}

\medskip

\centerline{\scshape Adilson E. Motter }
\medskip
{\footnotesize
 \centerline{ Department of Physics \& Astronomy and Northwestern Institute on Complex Systems,}
 \centerline{Northwestern University, Evanston, IL 60208, USA}
 \centerline{ Department of Molecular Biology, Princeton University, Princeton, NJ 08544, USA}
} 

\bigskip

\begin{abstract}
The metabolic network of a living cell involves several hundreds or thousands of interconnected biochemical reactions.  Previous research has shown that under realistic conditions only a fraction of these reactions is concurrently active in any given cell. This is partially determined by nutrient availability, but is also strongly dependent on the metabolic function and network structure.  Here, we establish rigorous bounds showing that the fraction of active reactions is smaller (rather than larger) in metabolic networks evolved or engineered to optimize a specific metabolic task, and we show that this is largely determined by the presence of thermodynamically irreversible reactions in the network.
We also show that the inactivation of a certain number of reactions determined by irreversibility can generate a cascade of secondary reaction inactivations that propagates through the network.
The mathematical results are complemented with numerical simulations of the
metabolic networks of  the bacterium \emph{Escherichia coli} and of human cells,
which show, counterintuitively,  that even the maximization of the total reaction flux in the network leads to a reduced number of active reactions.
\end{abstract}

\section{Introduction}
\label{sec1}

The mathematical modeling of biological networks has focused on the influence of the network structure on the functional properties of the system \cite{alon2006,bar2004,pals2006}.
Insights provided by these studies have shown, for example, that structural modules and hierarchical organization in the network are often related to compartmentalization of functional processes \cite{ravasz2002,spirin2003}.  This is important since intracellular processes are rarely carried out by individual elements, and often involve the coordinated activity of multiple genes, proteins, and biochemical transformations. Because different components may be recruited for different processes, the most fundamental aspect of the dynamics of complex intracellular networks concerns precisely the characterization of the specific parts of the network that are active under given conditions.

Recent research focused on the modeling of metabolic networks has shown that typical metabolic states tend to recruit a much larger number of reactions than states that maximize growth rate \cite{Takashi2008}. This counterintuitive property is important in multiple contexts. For example, it provides a partial explanation for the apparent dispensability of a large fraction of genes in single-cell organisms \cite{papp2004,blank2005,Cornelius2011},  since the genes associated with reactions that become inactive in growth-maximizing states are expected not to be essential. This also explains why genetic and environmental perturbations that cause growth defect are accompanied by a burst in reaction activity \cite{Fong2005,Fong2006}. These bursts can be attributed to the transient activation of otherwise inactive reactions that are recruited by the suboptimal states that follow the perturbation \cite{Takashi2008}, since such states tend to have a larger number of active reactions. Another important implication concerns the possibility of synthetic rescues \cite{Motter2008, Kim2009}, where the inactivation of one gene can be compensated by the targeted inactivation of other genes. The inactivation of such rescue genes can thus allow the recovery of lost biological function. This is possible in part because the rescue genes correspond to genes that would be inactive in an optimal state,  so that disabling them helps bring the state of the system closer to the desired optimal state \cite{Motter2010}. Understanding the root causes of the reduced reaction activity in optimal metabolic states is then of significant interest in the characterization and study of cellular metabolism.

Focusing on steady-state dynamics, here we use flux-balance based analysis and linear programming techniques to establish rigorous results on the number of reactions that can be active in a given metabolic network.  We derive conditions for a specific reaction to be inactive in all (Sec.\ \ref{sec3})  or active in almost all (Sec.\ \ref{sec4}) feasible metabolic states. We also establish bounds for the number of reactions that can be active in states that optimize an arbitrary linear function of reaction fluxes (Sec.\ \ref{sec5}),
which are derived based on the duality principle in linear programming,
and we study the uniqueness of the optimal solution for typical linear objective functions (Sec.\ \ref{sec6}). Finally, we implement numerical simulations in
reconstructed \emph{Escherichia coli} and human metabolic networks,
 both to compare with the rigorous bounds and to consider nonlinear objective functions (Sec.\ \ref{sec7}). Taken together, our results show that the reduced number of active reactions in the optimal states of a linear objective function, including growth rate, are mainly determined by the presence of irreversible reactions in the network. The irreversibility constraints are shown to play a role also in nonlinear objective functions, such as the aggregated flux and mass flow activity, whose optimal solutions are shown to have a number of active reactions comparable to the corresponding number for linear functions.

\section{Preliminary remarks}
\label{sec2}

\noindent
We consider time-independent metabolic states, which serve as an appropriate representation of the
state of single cells at time scales much shorter than the lifetime of the cells,  as well as of the average
behavior of a large population of cells  at arbitrary time scales  in time-invariant conditions.
Under this  \emph{steady-state} assumption, a cellular metabolic state is a solution of a homogeneous linear equation that accounts for
all stoichiometric constraints,
\begin{equation}\label{eqn:mb}
 \mathbf{S} \mathbf{v}=\mathbf{0},
\end{equation}
where $\mathbf{S}$ is the $m \times N$ stoichiometric matrix and $\bv \in \R^N$ is the vector of metabolic fluxes.
The components of $\bv = (v_1,\dots,v_N)^T$ include the fluxes of $n$ internal and transport reactions as well as
$n_\text{ex}$ exchange fluxes,
which model the transport of metabolic species across the system boundary.
Constraints of the form $v_i \le \beta_i$ imposed on the exchange fluxes are used to limit the
maximum uptake rates of substrates in the medium.
Additional constraints of the form $v_i \ge 0$ arise for the reactions that are irreversible.
Assuming that the cell's operation is mainly limited by the availability of substrates in the medium,
we impose no other constraints on the internal reaction fluxes, except for the ATP maintenance flux,
which is set to a fixed positive value. These additional constraints can be organized in the form
\begin{equation}\label{eqn:const}
\alpha_i \le v_i \le \beta_i,\quad i=1,\ldots,N.
\end{equation}
The set of all flux vectors $\bv$ satisfying Eqs.\ (\ref{eqn:mb}) and (\ref{eqn:const}) defines
the \emph{feasible solution space} $M \subset \R^N$, representing the capability of the metabolic network as a
system.  Because the number of fluxes $N$ is larger than the number of metabolic species $m$,
system (\ref{eqn:mb}) is under-determined and $M$ is  generally high dimensional.

Our study is formulated in the context of flux balance analysis~\cite{bonarius1997,varma1994},
which is based on the maximization of a metabolic objective function $\bc^T \bv$ within the feasible solution space $M$ (the superscript $T$ is used to denote transpose).
This reduces to
a linear programming problem
\begin{equation}\label{lp}
\begin{alignedat}{2}
&\text{maximize: } & \quad & \bc^T \bv = \sum_{i=1}^N c_i v_i\\
&\text{subject to: } & &\bS \bv = \0, \quad \bv  \in \R^N,\\
& & &\alpha_i \le v_i \le \beta_i,\quad i=1,\ldots,N,
\end{alignedat}
\end{equation}
where we set $\alpha_i = -\infty$ if $v_i$ is not bounded from below and  $\beta_i = \infty$ if $v_i$ is not bounded from above.
For a given objective function,
we can numerically determine an optimal flux
distribution for this problem.
This formulation is
also appropriate for the derivation of the rigorous results presented below.
In the particular case of growth maximization, the objective
vector $\bc$ is taken to be parallel to the biomass flux, which is modeled as an
effective reaction that converts metabolic species into biomass.

%--------------------------------------------------------------------------------------------------
\section{Inactivity due to stoichiometric constraints}
\label{sec3}

\noindent
The first question of interest is to determine the conditions under which a reaction will be inactive for any solution of Eq.\ (\ref{eqn:mb}).
Let us define the stoichiometric coefficient vector of reaction $i$ to be the $i$th column of the stoichiometric matrix $\bS$.
We similarly define the stoichiometric coefficient vector of an exchange flux.
If the stoichiometric vector of reaction $i$ can be written as a linear combination of the stoichiometric vector of reactions/exchange fluxes $i_1, i_2, \ldots, i_k$, we say that $i$ is a linear combination of $i_1, i_2, \ldots, i_k$.
We use this linear relationship to completely characterize the set of all reactions that are always inactive due to the stoichiometric constraints, regardless of any additionally imposed constraints, such as the availability of substrates in the medium, reaction irreversibility, cell maintenance requirements, and optimum growth condition.

\begin{theorem}\label{thm1}
Reaction $i$ is inactive for all $\bv$ satisfying $\bS \bv = \0$ if and only if it is not a linear combination of the other reactions and exchange fluxes.
\end{theorem}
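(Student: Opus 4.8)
The plan is to recognize Theorem~\ref{thm1} as a restatement of an elementary fact about the null space of $\bS$. Write $\bs_j$ for the $j$th column of $\bS$, i.e.\ the stoichiometric coefficient vector of reaction (or exchange flux) $j$, so that $\bS\bv=\sum_{j=1}^N v_j\bs_j$ and the condition $\bS\bv=\0$ is precisely $\sum_{j=1}^N v_j\bs_j=\0$. In this language, ``reaction $i$ is inactive for all $\bv$ with $\bS\bv=\0$'' means that $v_i=0$ holds for every $\bv$ in the null space of $\bS$, while ``$i$ is not a linear combination of the other reactions and exchange fluxes'' means $\bs_i\notin\operatorname{span}\{\bs_j:j\neq i\}$. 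So the task is to show these two statements are equivalent.

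For the ``if'' direction I would argue by contradiction: suppose $\bs_i\notin\operatorname{span}\{\bs_j:j\neq i\}$ but some $\bv$ with $\bS\bv=\0$ has $v_i\neq 0$. Then from $\sum_j v_j\bs_j=\0$ one can solve $\bs_i=-v_i^{-1}\sum_{j\neq i}v_j\bs_j$, which exhibits $\bs_i$ as a linear combination of the remaining columns, a contradiction. Hence $v_i=0$ for every null-space vector, i.e.\ reaction $i$ is always inactive.

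For the ``only if'' direction I would prove the contrapositive by an explicit construction: assume $\bs_i=\sum_{j\neq i}\lambda_j\bs_j$ for some scalars $\lambda_j$, and define $\bv\in\R^N$ by $v_i=-1$ and $v_j=\lambda_j$ for $j\neq i$. Then $\bS\bv=\sum_{j=1}^N v_j\bs_j=-\bs_i+\sum_{j\neq i}\lambda_j\bs_j=\0$, so this $\bv$ lies in the null space of $\bS$ and has $v_i\neq 0$; hence reaction $i$ is not inactive for all such $\bv$. This completes the equivalence.

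The argument is short, and the points requiring care are bookkeeping ones rather than genuine obstacles: first, ``inactive'' must be read strictly as $v_i=0$, so that a single admissible state with $v_i\neq 0$ suffices to destroy the ``always inactive'' property; and second, the theorem concerns only the homogeneous system $\bS\bv=\0$, so none of the inequality constraints $\alpha_i\le v_i\le\beta_i$ or irreversibility constraints enter, and reactions and exchange fluxes play symmetric roles as columns of $\bS$. In particular the $\bv$ constructed in the second step need not satisfy those inequalities, which is immaterial for the statement as posed.
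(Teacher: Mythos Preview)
Your proof is correct and is essentially identical to the paper's own argument: both directions amount to the same manipulation of $\sum_j v_j\bs_j=\0$, with the explicit construction $v_i=-1$, $v_j=\lambda_j$ appearing verbatim in the paper. The only cosmetic difference is that the paper first rephrases the biconditional in contrapositive form and then proves each direction directly, whereas you phrase one direction as a proof by contradiction; the underlying steps coincide.
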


\begin{proof}
We denote the stoichiometric coefficient vectors of reactions and exchange fluxes by $\bs_1, \ldots, \bs_N$.
The theorem is equivalent to saying that there exists $\bv$ satisfying both $\bS \bv = \0$ and $v_i \neq 0$ if and only if $\bs_i$ is a linear combination of $\bs_k$, $k=1,2,\ldots,N$, $k \neq i$.

To prove the forward direction in this statement, suppose that $v_i \neq 0$ in a state $\bv$ satisfying $\bS \bv = \0$.
By writing out the components of the equation $\bS \bv = \0$ and rearranging, we get
\begin{equation}
s_{ji} v_i = \sum_{k \neq i} (- v_k) s_{jk}, \quad j=1,\ldots,m. \nonumber
\end{equation}
Since $v_i \neq 0$, we can divide this equation by $v_i$ to see that $\bs_i$ is a linear combination of $\bs_k$, $k \neq i$ with coefficients $c_k = - v_k/v_i$.

To prove the backward direction, suppose that $\bs_i = \sum_{k \neq i} c_k \bs_k$.
If we choose $\bv$ so that $v_k = c_k$ for $k\neq i$ and $v_i = -1$, then for each $j$, we have
\[ (\bS \bv)_j = \sum_k v_k s_{jk}
= -s_{ji} + \sum_{k\neq i} c_k s_{jk}
= 0, \]
so $\bv$ satisfies $\bS \bv = \0$.
\end{proof}

Theorem \ref{thm1} holds true independently of other constraints because the proof does not involve Eq.\ (\ref{eqn:const}).
In particular, the sufficient condition for inactivity applies to any nutrient medium condition and does not depend on the reversibility of the reactions under consideration.
In the case of the reconstructed \emph{E.\ coli}
(human) metabolic network considered in this study (described in Sec.\ \ref{sec7}),
which includes 922 (3328) unique internal and transport reactions,  a total of 141 (475) reactions are always inactive in
steady states as a result of the condition in Theorem \ref{thm1}.

%--------------------------------------------------------------------------------------------------
\section{Activity in typical steady states}
\label{sec4}

\noindent
The next question of interest concerns the number of reactions that will be active with probability one in typical metabolic states.
The stoichiometric constraints $\bS \bv = \0$ define the linear subspace $\text{Nul}\,\bS = \{ \bv \in \R^N \,|\, \bS \bv = \0 \}$ (the null space of $\bS$), which contains the feasible solution space $M$.
However, the set $M$ can possibly be smaller than $\text{Nul}\,\bS$ because of the additional constraints arising from  environmental and physiochemical properties (availability of substrates in the medium, reaction irreversibility, and cell maintenance requirements).
Therefore, $M$ may have smaller dimension than $\text{Nul}\,\bS$.
If we denote the dimension of $M$ by $d$, there exists a unique $d$-dimensional linear submanifold of $\R^N$ that contains $M$, which we denote by $L_M$.
We can then use the Lebesgue measure naturally defined on $L_M$~\cite{rudin1987real} to make probabilistic statements, since we can define the probability of a subset $A \subseteq M$ as the Lebesgue measure of $A$ normalized by the Lebesgue measure of $M$.
In particular, we say that $v_i \neq 0$ for {\em almost all} $\bv \in M$ if the set $\{ \bv \in M \,|\, v_i = 0 \}$ has Lebesgue measure zero on $L_M$.
An interpretation of this is that $v_i \neq 0$ with probability one for an organism in a random state under given environmental conditions, which can be used to  prove the following theorem.
\begin{theorem}\label{thm2}
If $v_i \neq 0$ for some $\bv \in M$, then $v_i \neq 0$ for almost all $\bv \in M$.
\end{theorem}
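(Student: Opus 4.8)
The plan is to reduce the claim to the elementary fact that a proper affine subspace of a $d$-dimensional affine space carries zero $d$-dimensional Lebesgue measure. First I would observe that $M$ is convex, being the intersection of the linear subspace $\text{Nul}\,\bS$ with the box defined by the constraints~(\ref{eqn:const}); consequently the $d$-dimensional set $L_M$ introduced in Section~\ref{sec4} is precisely the affine hull of $M$, and in particular $M \subseteq L_M$.

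Next I would pass to the coordinate hyperplane $H_i = \{\bv \in \R^N \,|\, v_i = 0\}$, so that the set whose negligibility we must establish is $\{\bv \in M \,|\, v_i = 0\} = M \cap H_i \subseteq L_M \cap H_i$. The key step is to show that $L_M$ is not contained in $H_i$: by hypothesis there exists $\bv^\ast \in M \subseteq L_M$ with $v_i^\ast \neq 0$, which is exactly such a witness. Since $L_M$ and $H_i$ are affine subspaces of $\R^N$, their intersection is an affine subspace; as $\bv^\ast \in L_M \setminus H_i$, it is a proper subset of $L_M$, hence either empty or of dimension at most $d-1$, and in either case it is Lebesgue-null in $L_M$. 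By monotonicity of the Lebesgue measure on $L_M$, the smaller set $\{\bv \in M \,|\, v_i = 0\}$ is null as well, which is exactly the assertion that $v_i \neq 0$ for almost all $\bv \in M$ in the sense of Section~\ref{sec4}.

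I do not expect a genuine obstacle here; the only points requiring care are (i) verifying that the $L_M$ of Section~\ref{sec4} coincides with the affine hull of $M$, so that a single point of $M$ with $v_i \neq 0$ really does force $L_M \not\subseteq H_i$, and (ii) invoking cleanly the standard fact that a lower-dimensional affine subspace is negligible for the Lebesgue measure on the ambient $d$-dimensional space. Should one prefer to avoid measure theory altogether, the same conclusion follows from a purely convex-geometric argument: for every $\bv \in M$, the segment joining $\bv$ to $\bv^\ast$ lies in $M$ and $v_i$ restricted to it is an affine function that is not identically zero, hence vanishes for at most one value of the parameter; parametrizing $M$ by coordinates on $L_M$ and integrating over such segments then exhibits $\{\bv \in M \,|\, v_i = 0\}$ as a set of measure zero.
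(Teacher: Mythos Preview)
Your proposal is correct and follows essentially the same argument as the paper's proof: both show that the set $\{\bv\in M\,|\,v_i=0\}$ is contained in $L_M\cap H_i$, use the hypothesized witness $\bv^\ast\in M$ with $v_i^\ast\neq 0$ to conclude that this intersection is a proper affine subspace of $L_M$, and then invoke the fact that a lower-dimensional affine subspace is Lebesgue-null. Your additional remarks (the convexity of $M$ identifying $L_M$ with the affine hull, and the alternative segment-based argument) are sound elaborations but not needed for the core proof.
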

\begin{proof}
Suppose that $v_i \neq 0$ for some $\bv \in M$. The set $L_i := \{ \bv \in L_M \,|\, v_i = 0 \}$ is a linear submanifold of $L_M$, so we have $\dim{L_i} \le \dim{L_M}$.
If $\dim{L_i} = \dim{L_M}$, then we have $L_i = L_M \supseteq M$, implying that we have $v_i = 0$ for any $\bv \in M$, which violates the assumption.
Thus, we must have $\dim{L_i} < \dim{L_M}$, implying that $L_i$ has zero Lebesgue measure on $L_M$.
Since $M \subseteq L_M$, we have $M_i := \{ \bv \in M \,|\, v_i = 0 \} \subseteq \{ \bv \in L_M \,|\, v_i = 0 \} = L_i$, and thus $M_i$ also has Lebesgue measure zero.
Therefore, we have $v_i \neq 0$ for almost all $\bv \in M$.
\end{proof}

Theorem~\ref{thm2} implies that we can group the reactions and exchange fluxes into two categories:
\begin{enumerate}
\item {\em Always inactive}: $v_i = 0$ for all $\bv \in M$, and
\item {\em Almost always active}: $v_i \neq 0$ for almost all $\bv \in M$.
\end{enumerate}
Consequently, the number $n_+(\bv)$ of active reactions satisfies
\begin{equation}
n_+(\bv) = n_+^\text{typ} := n - n_0^m - n_0^e \quad \text{for almost all } \bv \in M,
\label{eq:env}
\end{equation}
where $n_0^m$ is the number of inactive reactions due to the stoichiometric constraints (characterized by Theorem~\ref{thm1}) and $n_0^e$ is the number of additional reactions in the category 1 above, which are due to the environmental and  irreversibility conditions.
Combining this result with the finding that optimal states have fewer active reactions (next section), it follows that a typical state $\bv \in M$ is non-optimal.

Equation (\ref{eq:env}) will lead to a different number of active reactions for different nutrient medium conditions [determined by Eq.\ (\ref{eqn:const})], with the general trend that this number will be larger in richer medium conditions. In the case of  the  \emph{E.\ coli}
(human) reconstructed network simulated in glucose minimal medium, as considered here (see Sec.\ \ref{sec7}), the number $n_0^e$  of inactive internal and transport reactions is
182 (1274), of which 158 (563)  are due to environmental limitations and 24 (711) are due to reaction irreversibility.
The latter includes the cascading-induced inactivation of some reactions due to the inactivation of different, irreversible reactions.

%--------------------------------------------------------------------------------------------------
\section{Activity in optimal states}
\label{sec5}

\noindent
We now turn to the  central part of our study, which concerns the number of reactions that can be active in steady states that optimize
a linear function of the metabolic fluxes. The linear programming problem for finding the flux distribution maximizing a linear objective function
 can be written in the matrix form:
\begin{equation}\label{primal}
\begin{alignedat}{2}
&\text{maximize: } & \quad & \bc^T \bv \\
&\text{subject to: } & &\bS \bv = \0,\;
\bA \bv \le \bb, \; \bv \in \R^N,
\end{alignedat}
\end{equation}
where $\bA$ and $\bb$ are defined as follows.
If the $i$th constraint is $v_j \le \beta_j$, the $i$th row of $\bA$ consists of all zeros except for the $j$th entry that is $1$, and $b_i = \beta_j$.
If the $i$th constraint is $\alpha_j \le v_j$, the $i$th row of $\bA$ consists of all zeros except for the $j$th entry that is $-1$, and $b_i = - \alpha_j$.
A constraint of the type $\alpha_j \le v_j \le \beta_j$ is broken into two separate constraints and represented in $\bA$ and $\bb$ as above.
The inequality between vectors is interpreted as inequalities between the corresponding components, so if the rows of $\bA$ are denoted by $\ba_1^T,\ba_2^T,\ldots,\ba_K^T$, the inequality $\bA \bv \le \bb$ represents the set of $K$ constraints $\ba_i^T \bv \le b_i$, $i=1,\ldots,K$.  By defining the feasible solution space
\begin{equation}
M := \{ \bv \in \R^N \,|\, \bS \bv = \0,\; \bA \bv \le \bb \},
\end{equation}
the problem can be compactly expressed as maximizing $\bc^T \bv$ in $M$.

The duality principle \cite{math_ref}
expresses that any linear programming problem (primal problem) is associated with a complementary linear programming problem (dual problem), and the solutions of the two problems are intimately related.
The dual problem associated with problem~\eqref{primal} is
\begin{equation}\label{dual}
\begin{alignedat}{2}
&\text{minimize: } & \quad & \bb^T \bu_1 \\
&\text{subject to: } & & \bA^T \bu_1 + \bS^T \bu_2 = \bc, \;
\bu_1 \ge \0, \\
& & &\bu_1 \in \R^K, \; \bu_2 \in \R^m,
\end{alignedat}
\end{equation}
where $\{ \bu_1, \bu_2 \}$ is the dual variable.
A consequence of the Strong Duality Theorem \cite{math_ref} is that the primal and dual solutions are related via a well-known optimality condition: $\bv$ is optimal for problem~\eqref{primal} if and only if there exists $\{ \bu_1, \bu_2 \}$ such that
\begin{gather}
\bS \bv = \0,\; \bA \bv \le \bb,\label{opt1}\\
\bA^T \bu_1 + \bS^T \bu_2 = \bc,\; \bu_1 \ge \0,\label{opt2}\\
\bu_1^T (\bA \bv - \bb) = 0. \label{opt3}
\end{gather}
Note that each component of $\bu_1$ can be positive or zero, and we can use this information to find a set of reactions that are forced to be inactive under optimization, as follows.
For any given optimal solution $\bv_0$, Eq.~\eqref{opt3} is equivalent to $u_{1i} (\ba_i^T \bv_0 - b_i) = 0$, $i=1,\ldots, K,$ where $u_{1i}$ is the $i$th component of $\bu_1$.
Thus, if $u_{i1}>0$ for a given $i$, we have $\ba_i^T\bv_0 = b_i$, and we say that the constraint $\ba_i^T\bv \le b_i$ is {\em binding} at $\bv_0$.
In particular, if an irreversible reaction ($v_i \ge 0$) is associated with a positive dual variable ($u_{1i} > 0$), then the irreversibility constraint is binding, and the reaction is inactive ($v_i = 0$) at $\bv_0$.
In fact, we can say much more: we prove the following theorem stating that such a reaction is actually {\em required to be inactive for all possible optimal solutions} for a given objective function $\bc^T \bv$.
\begin{theorem}\label{thm:optimal}
Suppose $\{ \bu_1, \bu_2 \}$ is a dual solution corresponding to an optimal solution of problem~\eqref{primal}.
Then, the set $\mopt$ of all optimal solutions of problem~\eqref{primal} can be written as
\begin{equation}\label{opt}
\mopt = \{ \bv \in M \,|\,
\ba_i^T \bv = b_i \text{ for all $i$ for which $u_{1i} > 0$} \},
\end{equation}
and hence every reaction associated with a positive dual component
is binding for all optimal solutions in $\mopt$.
\end{theorem}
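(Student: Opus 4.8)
The plan is to establish the set equality~\eqref{opt} by proving the two inclusions separately, using the optimality conditions~\eqref{opt1}--\eqref{opt3} together with the Strong Duality Theorem; the closing clause about binding constraints is then just a restatement of the ``$\subseteq$'' direction.

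For the inclusion $\mopt \subseteq \{ \bv \in M \,|\, \ba_i^T \bv = b_i \text{ whenever } u_{1i}>0 \}$, I would first record that, since $\{\bu_1,\bu_2\}$ is a dual solution associated with some optimal primal solution, strong duality makes it an optimal solution of~\eqref{dual}, so $\bb^T \bu_1$ equals the common optimal value. Now take an arbitrary $\bv \in \mopt$. Using $\bA^T \bu_1 + \bS^T \bu_2 = \bc$ and $\bS \bv = \0$, I compute
\begin{equation}
\bc^T \bv = \bu_1^T \bA \bv + \bu_2^T \bS \bv = \bu_1^T \bA \bv, \nonumber
\end{equation}
and since $\bc^T \bv$ also equals the optimal value $\bb^T \bu_1$, this yields $\bu_1^T (\bA \bv - \bb) = 0$. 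Because $\bu_1 \ge \0$ and $\bA \bv - \bb \le \0$ (as $\bv \in M$), every term $u_{1i}(\ba_i^T \bv - b_i)$ in this sum is nonpositive, so all of them vanish; in particular $\ba_i^T \bv = b_i$ whenever $u_{1i}>0$. This is precisely the complementary-slackness identity~\eqref{opt3} holding, with the \emph{fixed} dual vector $\bu_1$, at \emph{every} optimal $\bv$.

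For the reverse inclusion, I would take any $\bv \in M$ with $\ba_i^T \bv = b_i$ for all $i$ such that $u_{1i}>0$, and check that $\bv$ satisfies the optimality conditions~\eqref{opt1}--\eqref{opt3} with the \emph{same} pair $\{\bu_1,\bu_2\}$: condition~\eqref{opt1} holds since $\bv \in M$; condition~\eqref{opt2} holds since $\{\bu_1,\bu_2\}$ is dual feasible; and condition~\eqref{opt3} holds because $u_{1i}(\ba_i^T \bv - b_i) = 0$ for every $i$ (the factor $u_{1i}$ vanishes when $u_{1i}=0$, and the factor $\ba_i^T \bv - b_i$ vanishes when $u_{1i}>0$ by hypothesis). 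Hence $\bv$ is optimal, which gives~\eqref{opt}; the final assertion that every reaction with a positive dual component is binding throughout $\mopt$ is then immediate.

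I expect the only real obstacle to be expository rather than technical: one must resist merely quoting~\eqref{opt1}--\eqref{opt3}, since those conditions only guarantee that \emph{some} dual certificate exists for \emph{each} optimal $\bv$ individually, whereas the theorem needs complementary slackness for one \emph{prescribed} $\bu_1$ uniformly over all optimal $\bv$. The short computation $\bc^T \bv = \bu_1^T \bA \bv = \bb^T \bu_1$, justified by strong duality, is exactly what bridges that gap; the remainder is routine sign-chasing.
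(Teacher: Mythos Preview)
Your proof is correct and follows essentially the same two-inclusion strategy as the paper's sketch: the direction $Q \subseteq \mopt$ is handled identically by verifying \eqref{opt1}--\eqref{opt3} with the same dual pair, and for $\mopt \subseteq Q$ the paper merely asserts that any optimal $\bv$ lies in the hyperplane $H=\{\bv:\bc^T(\bv-\bv_0)=0\}$ and that this, together with feasibility, ``can be used to show'' $\bv \in Q$. Your computation $\bc^T\bv = \bu_1^T\bA\bv = \bb^T\bu_1$ followed by the sign argument is precisely the content the paper leaves implicit, so your write-up is in fact a more complete version of the same argument.
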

\begin{proof}[Sketch of proof]
Let $\bv_0$ be the optimal solution associated with $\{ \bu_1, \bu_2 \}$ and let $Q$ denote the right hand side of \eqref{opt}.
Any $\bv \in Q$ is an optimal solution of problem~\eqref{primal}, since straightforward verification shows that it satisfies (\ref{opt1}-\ref{opt3}) with the same dual solution $\{ \bu_1, \bu_2 \}$.
Thus, we have $Q \subseteq \mopt$.
Conversely, suppose that $\bv$ is an optimal solution of problem~\eqref{primal}.
Then, $\bv$ can be shown to belong to $H$, which we define to be the hyperplane that is orthogonal to $\bc$ and contains $\bv_0$, i.e.,
\begin{equation}
H := \{ \bv \in \R^N \,|\, \bc^T (\bv - \bv_0) = 0 \}. \nonumber
\end{equation}
This, together with the fact that $\bv$ satisfies $\bS \bv = \0$ and $\bA \bv \le \bb$, from \eqref{opt1}, can be used to show that $\bv \in Q$.
Therefore, any optimal solution must belong to $Q$.
Putting both directions together, we have $Q = \mopt$.
\end{proof}

As an example, consider the five-reaction network shown in Fig.~\ref{fig:example}(a) where the flux $v_4$ is maximized.  The problem can be written in the form of problem~\eqref{primal} with
\begin{equation}
\bA=\begin{pmatrix*}[r]
 1 &  0 &  0 &  \,\,\,\, 0 &  0\\
-1 &  0 &  0 &  0 &  0\\
 0 & -1 &  0 &  0 &  0\\
 0 &  0 & -1 &  0 &  0\\
 0 &  0 &  0 &  0 & -1
\end{pmatrix*}\!,\,\,\,
\bb = \begin{pmatrix*}[r]
1\\ -1\\ 0\\ 0\\ 0
\end{pmatrix*}\!,\,\text{and }
\bS=\begin{pmatrix*}[r]
1 & -1 &  0 & -1 &  0\\
0 &  1 & -1 &  0 & -1
\end{pmatrix*}\!. \nonumber
\end{equation}
Note that the equality constraint $v_1=1$ is split into two inequality constraints $v_1 \ge 1$ and $v_1 \le 1$ for convenience and corresponds to the first two rows of $\bA$ and $\bb$.
The optimal solution space $\mopt$ consists of the single point $\bv = (1,0,0,1,0)^T$ and a possible choice of corresponding dual solution $\{\bu_1,\bu_2\}$ is given by
$$\bu_1=(1,0,1,0,0)^T \text{ and } \bu_2=(-1,0)^T.$$
For this dual solution, we have $u_{11}=1>0$ (corresponding to the constraint $v_1\le 1$) and $u_{13}=1>0$ (corresponding to the constraint $-v_2\le 0$), and Eq.~\eqref{opt} in Theorem~\ref{thm:optimal} becomes
\[\mopt = \{ \bv \in M \,|\, v_1=1, v_2 = 0\} = \{ \bv \in M \,|\, v_2 = 0\}.\]
Note that the constraint $v_1=1$ can be omitted since it is satisfied by any $\bv \in M$ in this example.
\begin{figure}
\begin{center}
\epsfig{figure=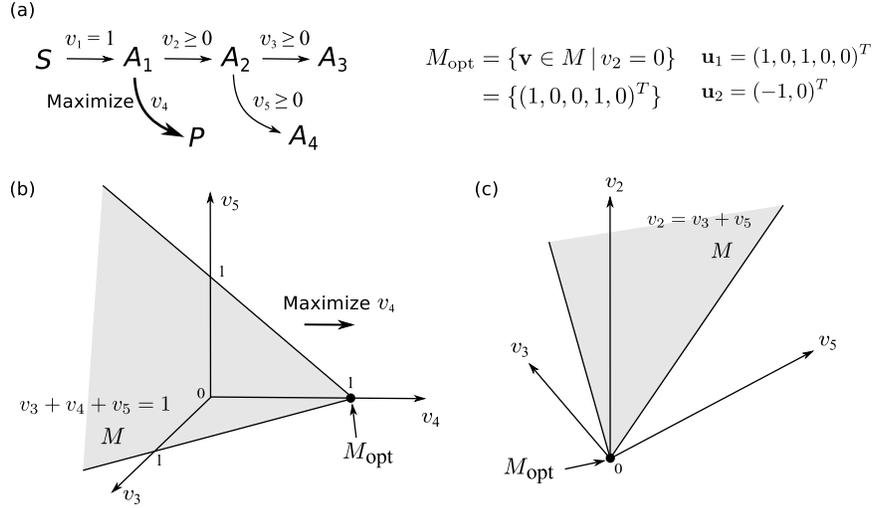,width=4.5in}
\end{center}
\caption{Simple example problem with five reactions.  (a) Reaction network, where the flux $v_4$ producing $P$ is maximized.  (b) Feasible solution space $M$ and optimal solution space $\mopt$ in the projection of the flux space onto the $(v_3,v_4,v_5)$-coordinates. (c) $M$ and $\mopt$ in the projection of the flux space onto the $(v_2,v_3,v_5)$-coordinates. }
\label{fig:example}
\end{figure}

Once we solve problem~\eqref{primal} numerically and obtain a \emph{single} pair of primal and dual solutions ($\bv_0$ and $\{ \bu_1, \bu_2 \}$), we can use the characterization of $\mopt$ given in Eq.~\eqref{opt} to identify all reactions that are required to be inactive (or active) for any optimal solutions.
To do this we solve the following auxiliary linear optimization problems for each $i=1,\ldots,N$:
\begin{equation}\label{aux}
\begin{alignedat}{2}
&\text{maximize/minimize: } & \quad & v_i \\
&\text{subject to: } & &\bS \bv = \0,\; \bA \bv \le \bb,\;
\ba_i^T\bv = b_i \text{ for all $i$ for which $u_{1i}>0$.}
\end{alignedat} \nonumber
\end{equation}
If the maximum and minimum of $v_i$ are both zero, then the corresponding reaction is required to be inactive for all $\bv \in \mopt$.
If the minimum is positive or maximum is negative, then the reaction is required to be active.
Otherwise, the reaction may be active or inactive, depending on the choice of an optimal solution.
Thus, we obtain the numbers $n_+^\text{opt}$ and $n_0^\text{opt}$ of internal and transport reactions that are required to be active and inactive, respectively, for all $\bv \in \mopt$.
The number of active reactions for any $\bv \in \mopt$ is then bounded as
\begin{equation}\label{naopt-s}
n_+^\text{opt} \le n_+(\bv) \le n - n_0^\text{opt}.
\end{equation}

The distribution of $n_+(\bv)$ within the bounds is singular: the upper bound in Eq.~\eqref{naopt-s} is attained for almost all $\bv \in \mopt$.
To see this, we apply Theorem~\ref{thm2} with $M$ replaced by $\mopt$.
This is justified since we can obtain $\mopt$ from $M$ by simply imposing additional equality constraints.
Therefore, if we set aside the $n_0^\text{opt}$ reactions that are required to be inactive
(including
$n_0^m$ and $n_0^e$ reactions that are inactive for all $\bv \in M$), all the other reactions are active for almost all $\bv \in \mopt$.
Consequently,
\begin{equation}\label{qi}
n_+(\bv) = n - n_0^\text{opt}\quad \text{for almost all } \bv \in \mopt.
\end{equation}

We can also use Theorem~\ref{thm:optimal} to further classify those inactive reactions caused by the optimization as due to two specific mechanisms:
\begin{enumerate}
\item \emph{Irreversibility.} The irreversibility constraint ($v_i \ge 0$) on a reaction can be binding ($v_i = 0$), which directly forces the reaction to be inactive for all optimal solutions.  Such inactive reactions are identified by checking the positivity of dual components ($u_{1i}$).
\item \emph{Cascading.} All other reactions that are required to be inactive for all $\bv \in \mopt$ are due to a cascade of inactivity triggered by the first mechanism, which propagates over the metabolic network via the stoichiometric constraints.
\end{enumerate}
These inactive reactions occur in addition to the irreversibility/cascading-induced reaction inactivation identified in Sec.\ \ref{sec4} for typical (in fact all) steady states.

In general, a given solution of problem~\eqref{primal} can be associated with multiple dual solutions.
The set and the number of positive components in $\bu_1$ can depend on the choice of a dual solution, and therefore the categorization according to these specific mechanisms is generally not unique.
For the example problem of Fig.~\ref{fig:example}, it is clear from $\mopt = \{ (1,0,0,1,0)^T \}$ that three reactions $v_2$, $v_3$, and $v_5$ are required to be inactive in the optimal state.
The dual solution given above categorizes $v_2$ under irreversibility, and $v_3$ and $v_5$ under cascading.  This reflects the fact that making $v_2=0$ under the stoichiometric constraint $v_2=v_3+v_5$, along with the irreversibility constraints $v_3\ge 0$ and $v_5 \ge 0$, forces $v_3=v_5=0$ [Fig.~\ref{fig:example}(c)].  Another possible dual solution is given by
$$\bu_1=(1,0,0,1,1)^T \text{ and } \bu_2=(-1,-1)^T,$$
which leads to an alternative characterization of the same $\mopt$:
$$ \mopt = \{ \bv \in M \,|\, v_3 = v_5 = 0\}, $$
categorizing $v_3$ and $v_5$ under irreversibility, and $v_2$ under cascading.
One can indeed see graphically in the projection onto $(v_3,v_4,v_5)$-coordinates in Fig.~\ref{fig:example}(b) that the maximization of $v_4$ under the stoichiometric constraint $v_3+v_4+v_5=1$ (which follows from $v_1=1$ and $\bS \bv = \0$) forces the irreversible fluxes $v_3$ and $v_5$ to be zero, which in turn forces $v_2 = 0$ [Fig.~\ref{fig:example}(c)].
Clearly, one can also characterize the optimal solution space as
$$ \mopt = \{ \bv \in M \,|\, v_2 = v_3 = v_5 = 0\}, $$
which corresponds for example to the choice of dual solution given by
$$\bu_1=(1,0,\textstyle\frac{1}{2},\frac{1}{2},\frac{1}{2})^T \text{ and } \bu_2=(-1,-\frac{1}{2})^T.$$
This leads to the categorization of all three inactive reactions under irreversibility.
Thus, we can interpret the non-uniqueness of the categorization as the fact that different sets of triggering inactive reactions can create the same cascading effect on the reaction activity.

The results above are important both because they can be applied to any linear objective function and because a significant fraction of real metabolic reactions are irreversible.  In the case of the \emph{E.\ coli}
(human) reconstructed network,  a total of 73.4\% (65.6\%) of all internal and transport reactions are irreversible. Moreover, a number of other reactions are effectively irreversible because the irreversibility of different reactions in the same pathway constrains them not to run in one of the two directions;
this leads to a total of 94.3\% (74.9\%) of  the reactions whose fluxes are either necessarily nonnegative  or necessarily nonpositive in all steady-state solutions.  In the case of growth-maximizing states for the conditions considered in our numerical experiments,  out of all 922 (3328) internal and transport reactions in the reconstructed network, a total of 146 (106) reactions are inactive due to irreversibility constraints, and a total of 114 (293) other reactions are inactive  due to a cascade of reaction inactivation; some of the irreversible reactions can be assigned to either the first or the second of these two groups. The bounds provided by Eq.\ (\ref{naopt-s}) depend on the objective function. In the case of growth-maximizing states, the lower and upper bounds are 273 (113) and 339 (1180), respectively. These numbers should be compared with the number 599 (1579) of reactions that are active in typical, suboptimal states.  This clearly shows that optimal states are necessarily constrained to have a smaller number of active reactions, and that this is due to the presence of irreversible reactions in the network.

%--------------------------------------------------------------------------------------------------
\section{Typical linear objective functions}
\label{sec6}

Another problem of interest concerns the uniqueness of the optimal solutions.
While a number of necessary and/or sufficient conditions for this uniqueness are known~\cite{MR2378114,Mangasarian:1979fk}, we are not aware of any probabilistic statements in the literature addressing this issue.
Since the feasible solution space $M$ is convex, its ``corners'' can be mathematically formulated as  {\em extreme points}, defined as points $\bv \in M$ that cannot be written as $\bv = a \bx + b \by$ with $a+b=1$, $0<a<1$ and ${\bf x, y} \in M$ such that ${\bf x \neq y}$.
Intuition from the two-dimensional case (Fig.~\ref{fig:extreme}) suggests that for a typical choice of the objective vector $\bc$ such that problem~\eqref{primal} has a solution, the solution is unique and located at an extreme point of $M$.
\begin{figure}
\begin{center}
\epsfig{figure=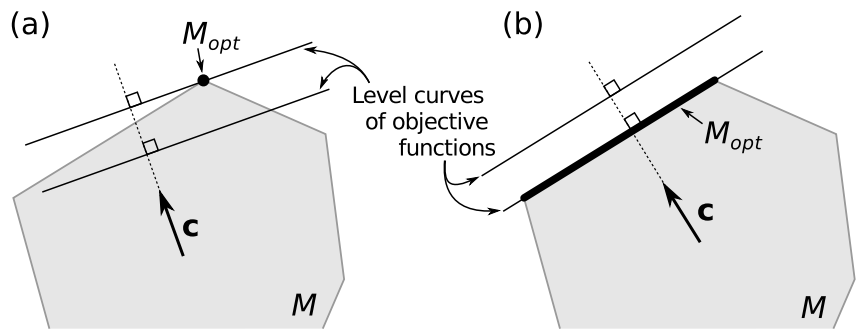,width=4.5in}
\end{center}
\caption{Uniqueness of optimal solutions.  (a) For a typical objective vector $\bc$, a
unique optimum is obtained at a single extreme point.  (b) In the exceptional cases where $\bc$ is perpendicular to an edge, all points on the edge are optimal.
}
\label{fig:extreme}
\end{figure}
We prove here that this is indeed true in general, as long as the objective function is bounded on $M$, and hence an optimal solution exists.
\begin{theorem}\label{thm:corner}
Suppose that the set of objective vectors
$$B = \{\bc \in \R^N \,|\, \text{$\bc^T \bv$ is bounded on $M$}\}$$
 has positive Lebesgue measure.
Then, for almost all $\bc$ in $B$, there is a unique solution of problem~\eqref{primal}, and it is located at an extreme point of $M$.
\end{theorem}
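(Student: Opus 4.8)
The plan is to reduce the statement to two elementary facts — a convex set of positive Lebesgue measure has nonempty interior, and the boundary of a convex set is Lebesgue-null — together with the structure theory of polyhedra. Throughout I would assume $M \neq \emptyset$, as holds for any metabolic network of interest (if $M = \emptyset$ then problem~\eqref{primal} has no solution for any $\bc$, even though $B = \R^N$). First I would make $B$ explicit: for a nonempty polyhedron, $\bc^T\bv$ is bounded above on $M$ if and only if $\bc^T\mathbf{w} \le 0$ for every $\mathbf{w}$ in the recession cone $C := \{\mathbf{w} \in \R^N : \bS\mathbf{w} = \0,\ \bA\mathbf{w} \le \0\}$, so $B = \{\bc \in \R^N : \bc^T\mathbf{w} \le 0 \text{ for all } \mathbf{w} \in C\}$ is a closed convex cone. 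Since $B$ is convex of positive measure it has nonempty interior and $\partial B$ is null; moreover $\mathrm{int}(B)$ is contained in $\{\bc : \bc^T\mathbf{w} < 0 \text{ for all } \mathbf{w} \in C \setminus \{\0\}\}$, because if $\bc^T\mathbf{w}_0 = 0$ for some $\mathbf{w}_0 \in C \setminus \{\0\}$ then $\bc + \epsilon\mathbf{w}_0 \notin B$ for all $\epsilon > 0$, so $\bc \in \partial B$.

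Next I would observe that positivity of the measure of $B$ forces $M$ to be pointed: if $M$ contained a line of direction $\mathbf{w} \neq \0$, then $\pm\mathbf{w} \in C$, whence $B \subseteq \mathbf{w}^\perp$, contradicting positive measure. A pointed polyhedron has finitely many extreme points $\bv_1,\dots,\bv_p$ and decomposes as $M = Q + C$ with $Q = \mathrm{conv}\{\bv_1,\dots,\bv_p\}$. Then for $\bc \in \mathrm{int}(B)$ and any $\bv = \mathbf{q} + \mathbf{w} \in M$ with $\mathbf{q}\in Q$, $\mathbf{w}\in C$, one has $\bc^T\bv = \bc^T\mathbf{q} + \bc^T\mathbf{w} \le \bc^T\mathbf{q} \le \max_k \bc^T\bv_k$, with equality throughout exactly when $\mathbf{w} = \0$ and $\mathbf{q}$ maximizes $\bc^T$ over $Q$; hence the optimal solution set of problem~\eqref{primal} equals $\mathrm{conv}\{\bv_j : \bc^T\bv_j = \max_k \bc^T\bv_k\}$. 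This is a single point — necessarily an extreme point of $M$ — precisely when the maximum over $\bv_1,\dots,\bv_p$ is attained at a unique vertex, whereas if two distinct vertices $\bv_i \neq \bv_j$ both attain it then $\bc^T(\bv_i - \bv_j) = 0$. Therefore the set of $\bc \in B$ for which problem~\eqref{primal} does not have a unique solution located at an extreme point of $M$ is contained in
\[ \partial B \cup \bigcup_{1 \le i < j \le p} \{\bc \in \R^N : \bc^T(\bv_i - \bv_j) = 0\}, \]
which is a finite union of Lebesgue-null sets ($\partial B$ by the first paragraph, each hyperplane because $\bv_i \neq \bv_j$) and hence itself null — exactly the assertion of the theorem.

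The delicate part, where I expect the real work to lie, is the first paragraph: one must extract from the bare positive-measure hypothesis enough structure to speak of extreme points at all (that $M$ is pointed, so $M = Q + C$ with finitely many vertices) and correctly assemble the standard polyhedral facts (the recession-cone criterion for boundedness, the decomposition $M = Q + C$, finiteness of the vertex set, and the description of $\mathrm{int}(B)$). Once these are in place, the reduction in the second paragraph to a finite union of hyperplanes, and the passage from ``null bad set'' to ``almost every $\bc \in B$'', are routine, the only measure-theoretic input being that hyperplanes and boundaries of convex sets are Lebesgue-null.
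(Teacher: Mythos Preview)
Your proof is correct and takes a genuinely different route from the paper's. The paper does not invoke the recession cone or the Minkowski--Weyl decomposition at all; instead it leans on its own Theorem~\ref{thm:optimal}, which says that for any $\bc$ the optimal set $\mopt(\bc)$ coincides with some face $Q_I = \{\bv \in M : \ba_i^T\bv = b_i \text{ for all } i \in I\}$ determined by a subset $I \subseteq \{1,\ldots,K\}$ of the inequality constraints. Whenever $Q_I$ has dimension at least one, any $\bc$ for which $\mopt(\bc) = Q_I$ must be orthogonal to the linear part of $Q_I$, hence lie in a subspace of dimension at most $N-1$. The bad set is thus contained in the finite union $\bigcup_I Q_I^\perp$ over the $2^K$ index sets $I$ with $\dim Q_I \ge 1$, and the extreme-point conclusion is handled separately by a short convexity argument showing that a unique optimum cannot be a proper convex combination of feasible points.

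Your approach trades Theorem~\ref{thm:optimal} for standard polyhedral structure theory: you characterize $B$ as the polar of the recession cone, deduce from positive measure that $M$ is pointed (something the paper never makes explicit), and then index the bad set by pairs of vertices rather than by faces, obtaining $\partial B \cup \bigcup_{i<j}(\bv_i - \bv_j)^\perp$. This buys you independence from the duality machinery of Sec.~\ref{sec5} and a clearer explanation of \emph{why} the positive-measure hypothesis is needed (it is exactly what rules out lines in $M$ and guarantees vertices exist). The paper's version, by contrast, is shorter and more integrated with the surrounding results, but its union over $2^K$ faces is less transparent than your union over $\binom{p}{2}$ hyperplanes.
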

\begin{proof}
For a given $\bc \in B$, the function $\bc^T \bv$ is bounded on $M$, so the solution set $\mopt = \mopt(\bc)$ of problem~\eqref{primal} consists of either a single point or multiple points.
Suppose $\mopt$ consists of a single point $\bv$ and it is not an extreme point.
By definition, it can be written as $\bv = a{\bf x}+b{\bf y}$ with $a+b=1$, $0<a<1$ and ${\bf x, y} \in M$ such that ${\bf x \neq y}$.
Since $\bv$ is the only solution of problem~\eqref{primal}, ${\bf x}$ and ${\bf y}$ must be suboptimal, and hence we have $\bc^T {\bf x} < \bc^T \bv$ and $\bc^T {\bf y} < \bc^T \bv$.
Then,
\begin{eqnarray*}
\bc^T {\bf y} &=& (\bc^T \bv - a \bc^T {\bf x})/b\\
&>& (\bc^T \bv - a \bc^T \bv)/b\\
&=& \bc^T \bv,
\end{eqnarray*}
and we have a contradiction with the fact that $\bv$ is optimal.
Therefore, if $\mopt$ consists of a single point, it must be an extreme point of $M$.

We are left to show that the set of $\bc \in B$ for which $\mopt(\bc)$ consists of multiple points has Lebesgue measure zero.
By Theorem~\ref{thm:optimal}, for a given $\bc$, there exists a set of indices $I \subseteq \{1,\ldots,K\}$ such that $\mopt(\bc) = Q_I := \{ \bv \in M \,|\, \ba_i^T \bv = b_i \text{ for all $i \in I$} \}$, so
\begin{equation}\label{mopt}
\{ \bc \in \R^N \,|\, \mopt(\bc)\text{ contains multiple points} \}
\subseteq \bigcup_{I} \{\bc \in \R^N \,|\, Q_I = \mopt(\bc) \},
\end{equation}
where the union is taken over all $I \subseteq \{1,\ldots,K\}$ for which $Q_I$ contains multiple points.
If $\bc$ is in one of the sets in the union in Eq.~\eqref{mopt}, the set $Q_I$, being the set of all optimal solutions, is orthogonal to $\bc$.  Hence, $\bc$ is in $Q_I^\perp$, the orthogonal complement of $Q_I$ defined as the set of all vectors orthogonal to $Q_I$.
Therefore,
\begin{equation}\label{mopt2}
\{ \bc \in \R^N \,|\, \mopt(\bc)\text{ contains multiple points} \}
\subseteq \bigcup_{I} Q_I^\perp,
\end{equation}
Because $Q_I$ is convex, it contains multiple points if and only if its dimension is at least one, implying that each $Q_I^\perp$ in the union in Eq.~\eqref{mopt2} has dimension at most $N-1$, and hence has zero Lebesgue measure in $\R^N$.
Since there are only a finite number of possible choices for $I \subseteq \{1,\ldots,K\}$, the right hand side of Eq.~\eqref{mopt2} is a finite union of sets of Lebesgue measure zero.
Therefore, the left hand side also has Lebesgue measure zero.
\end{proof}

Note that growth rate is not a typical objective function. Because this objective function has nonrandom coefficients and involves only a fraction of all metabolic fluxes,
the objective vector {\bf c} is generally perpendicular to a surface limiting the space of feasible solutions.
For this reason, the growth-maximizing states are generally not unique. In the case of the \emph{E.\ coli} reconstructed network simulated in glucose minimal medium, our numerical calculations indicate that the growth-maximizing solutions form a space that is $26$-dimensional. In the case of the human reconstructed network, the corresponding dimension is 494.

\section{Numerical experiments}
\label{sec7}

Two questions follow from the results above. First, given the \emph{a priori} surprising finding that metabolic activity as measured by the number of active reactions decreases in optimal states, what happens if we use other measures of metabolic activity such as total reaction flux in the network? Second, given that these results were derived for linear objective functions, to what extent does the observed reduction in the number of active reactions manifest itself in nonlinear objective functions of biological significance? These two questions are best examined using numerical experiments.  Both are addressed below by considering the following objective functions:

\begin{enumerate}
\item \emph{Total flux in the network}: Defined as $\phi_{f}= \sum_i |v_i|$,  where $i$ runs through all internal and transport reactions (excluding the biomass flux), it measures the overall metabolic activity
while accounting for the differences in the fluxes of different reactions. This objective function is nonlinear because of the absolute value used
to properly measure the flux of the reversible reactions.

\item \emph{Total mass flow in the network}:  Defined as $\phi_{m}= \sum_i m_i |v_i|$, where $m_i$  is the mass of the reactants (or products) involved in reaction $i$,  it measures the overall metabolic activity weighted by the mass flow of each reaction \cite{kritz2010}. The sum is over the same reaction set considered in the definition of  $\phi_{f}$.
\end{enumerate}
For other nonlinear objective functions of biological significance in cellular metabolism, we refer to Ref.\ \cite{sch2007}.

\begin{figure}
\begin{center}
\epsfig{figure=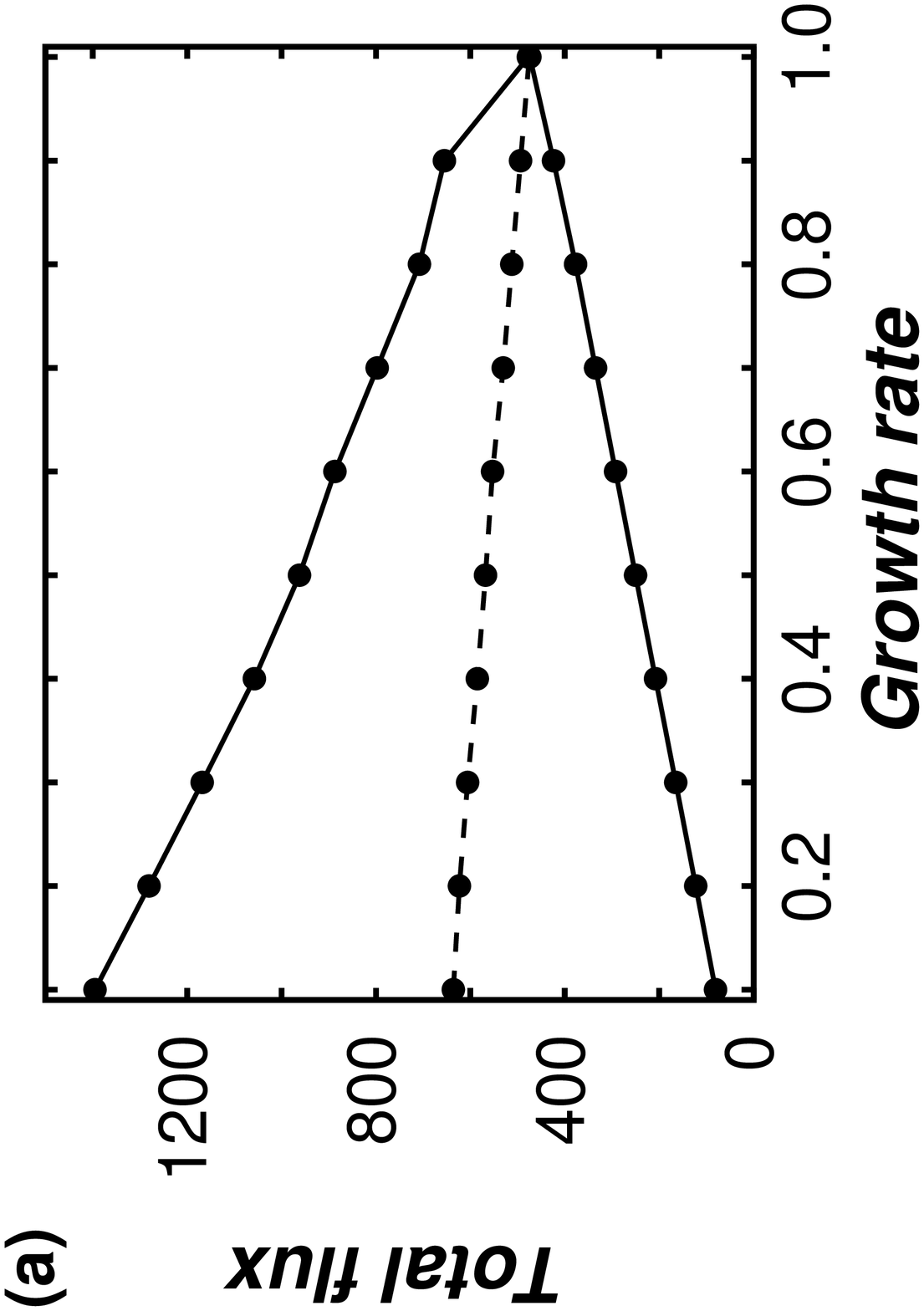,angle=-90, width=6.0cm}
\epsfig{figure=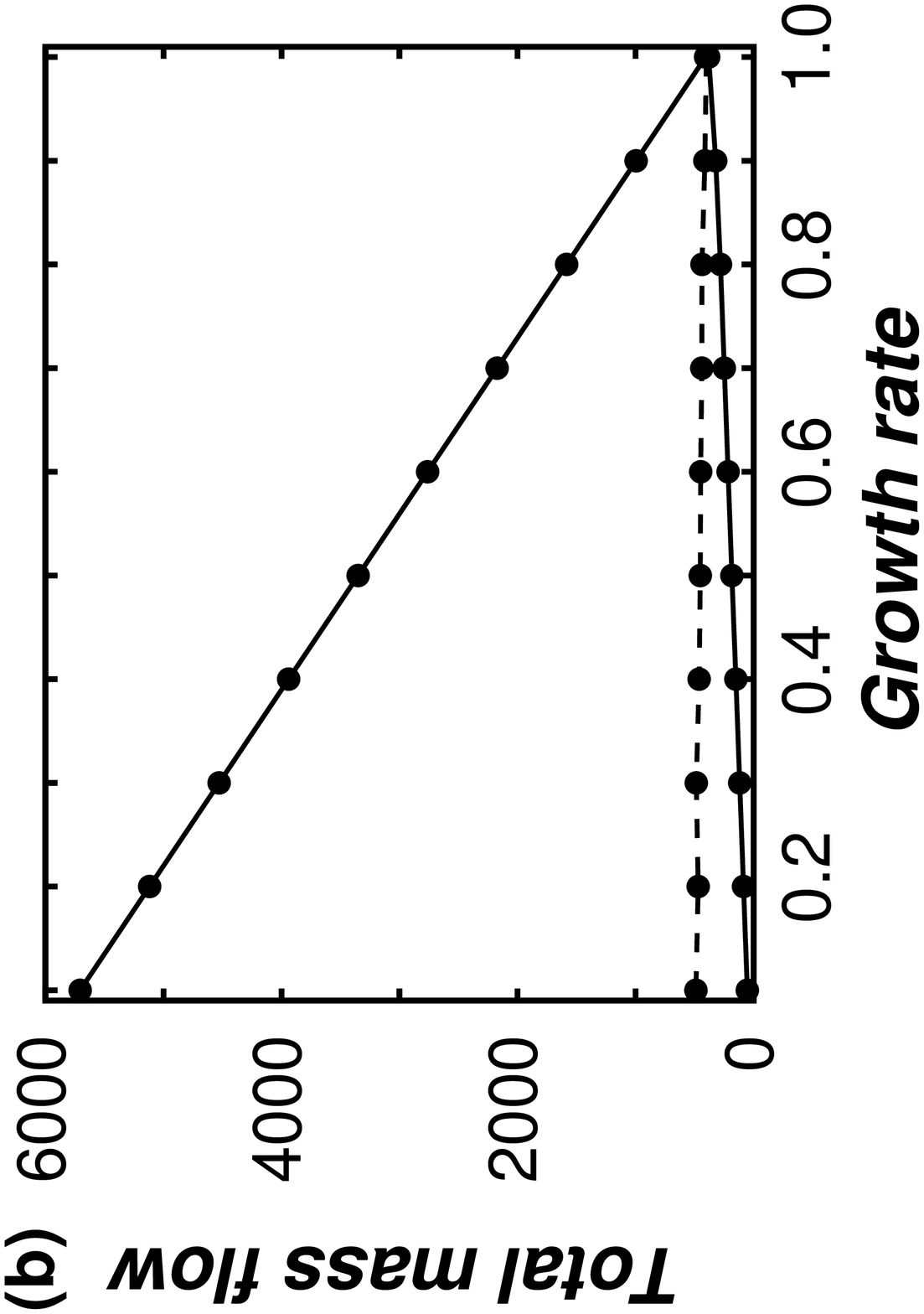,angle=-90, width=6.0cm}
\epsfig{figure=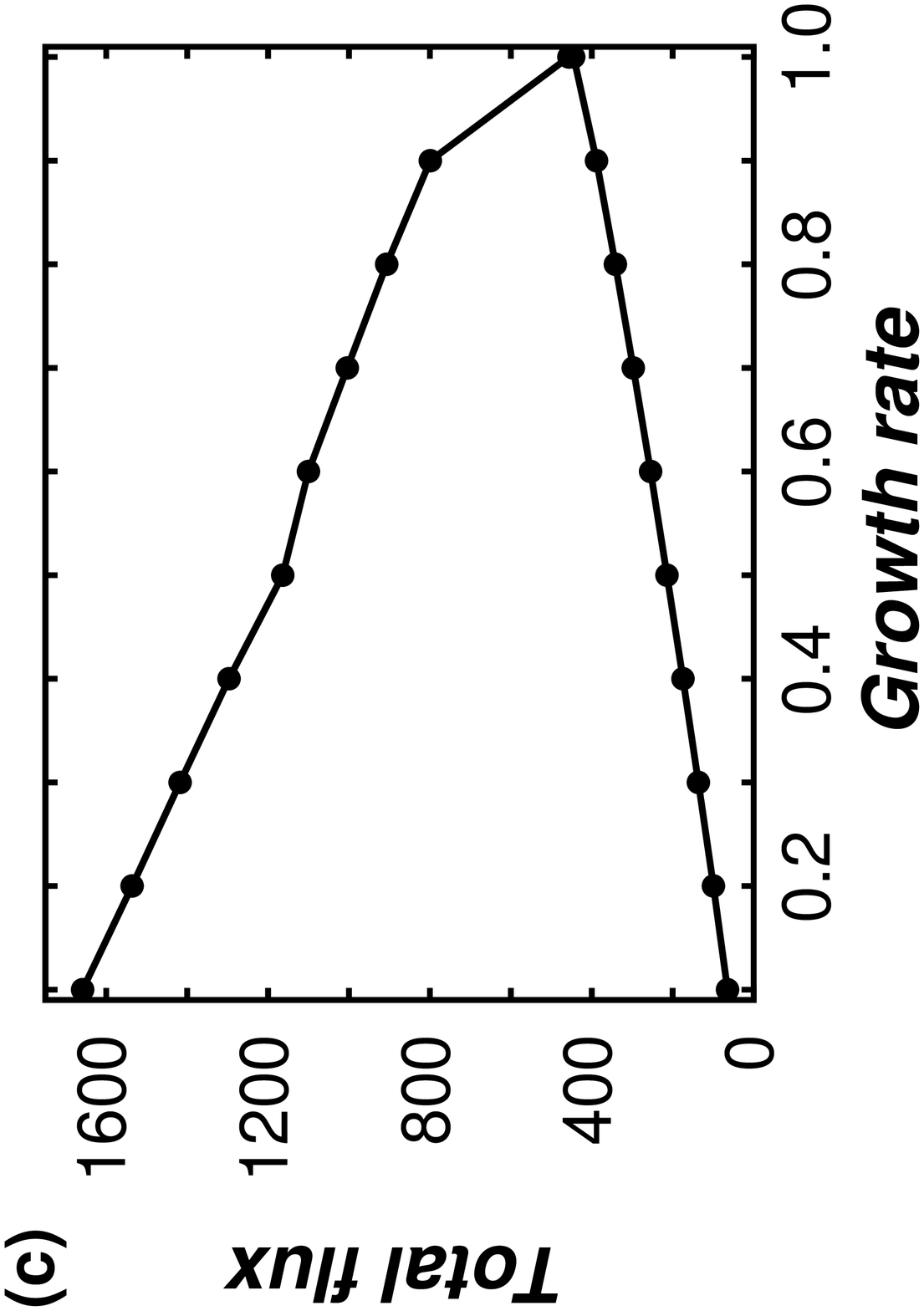,angle=-90, width=6.0cm}
\epsfig{figure=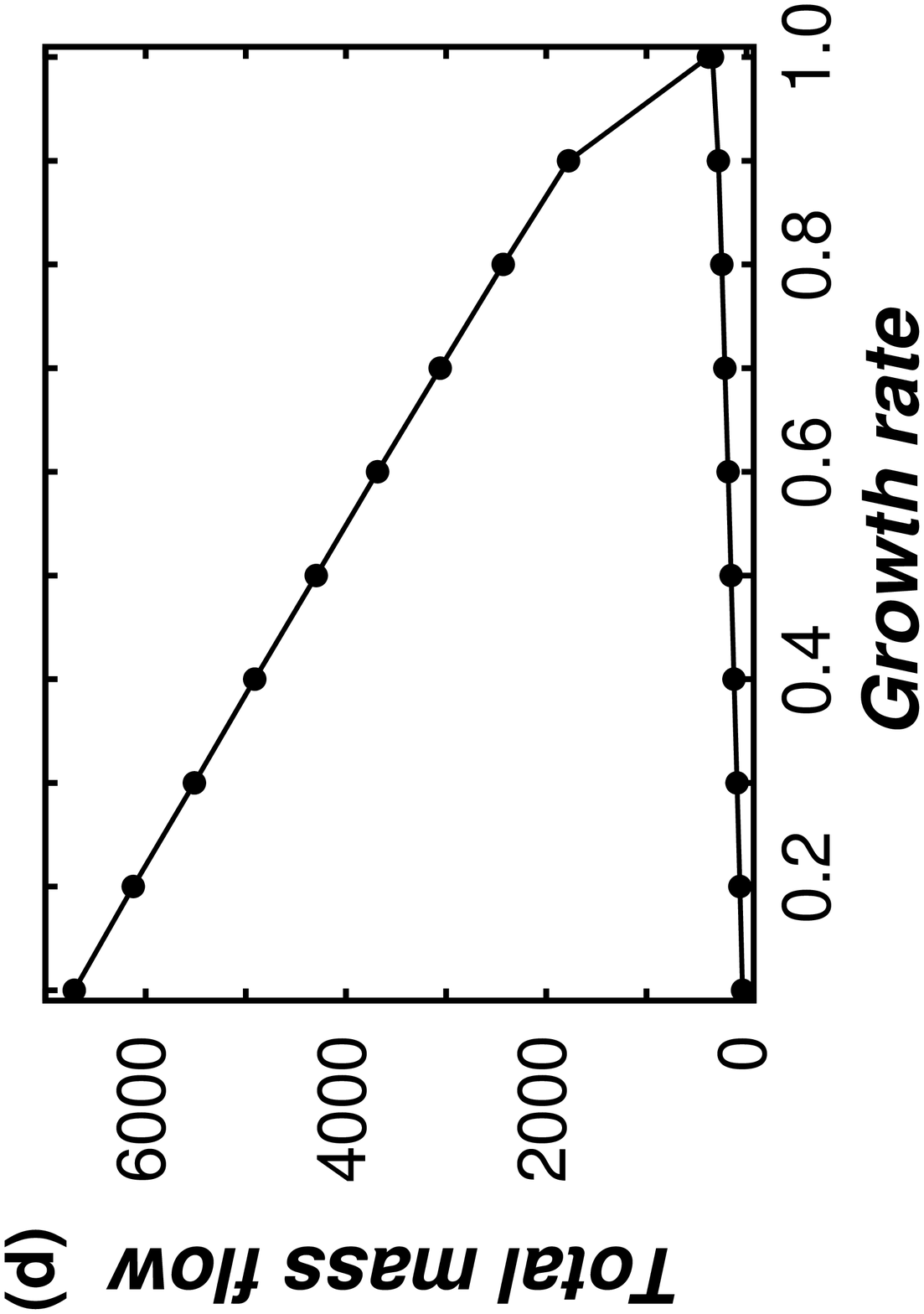,angle=-90, width=6.0cm}
\end{center}
\caption{Metabolic activity measured in terms of total flux and total mass flow for  the \emph{E.\ coli}  reconstructed network simulated in glucose minimal medium. (a,b) Maximum and minimum of the total flux $\phi_{f}$ in units of mmol/g DW-h  (a) and of the total mass flow $\phi_{m}$ in units of g/g DW-h (b) (continuous lines) as a function of the growth rate (normalized by its maximum under the given conditions), where g DW denotes grams of dry weight. The dashed lines indicate the average in the space of feasible solutions calculated from  $5\times 10^6$ randomly selected points obtained using the hit-and-run method. The standard deviation
in estimating the average
 is smaller than the size of the symbols.
(c,d) Maximum and minimum $\phi_{f}$  (c) and  $\phi_{m}$  (d) when the irreversibility constraints $\alpha_i$ are relaxed from zero to $-0.1$ mmol/g DW-h for all irreversible reactions. Allowing all the reactions to be reversible leads to moderate changes in the optimal values of the objective functions, but it leads to drastic changes in the number of active reactions (Table \ref{tabl}).
}
\label{fig:numerical}
\end{figure}

Figure \ref{fig:numerical} shows the results of our numerical experiments for $\phi_{f}$ and $\phi_{m}$ on the \emph{E.\ coli}  reconstructed network. In both cases, the maximum (minimum) of the objective function for a given growth rate decreases (increases) as the growth rate increases, and it converges to essentially a single intermediate value for all states that maximize growth rate [Fig.\ \ref{fig:numerical}(a,b)].  The average activity, which we determined by randomly sampling the solution space, is essentially constant (it decreases very slowly as the growth rate increases). The sampling of the solution space was performed using the hit-and-run method~\cite{smith84}, which is an efficient algorithm to sample  high-dimensional convex regions. Our implementation of this method is as described in our previous study \cite{Cornelius2011} and involves  artificial centering \cite{kauf98}.  The observed behavior of the total flux activity and total mass flow activity should be contrasted with metabolic activity as measured in terms of the number of active reactions, which is significantly smaller at growth-maximizing states.

The average number of active reactions in the $\phi_{f}$-maximizing states across different growth rates is just 302 out of a total of 571 that would be active in typical states (the latter too is an average over different growth rates, and  is smaller than the number 599 anticipated in Sec.\ \ref{sec5} because of  additional constraints set to the diverging cycles throughout this section---see below).  A similar result holds true for $\phi_{m}$-maximizing states (Table \ref{tabl}).  Therefore, the maximizations of total flux and total mass flow in the network also lead to a reduced (rather than increased) number of active reactions compared to typical states [such as those determined by the hit-and-run method in Fig.\ \ref{fig:numerical}(a,b)]. This number varies very little with growth rate and is essentially undistinguishable from the number of active reactions in growth-maximizing states (see standard deviations in Table \ref{tabl}).  While these results concern {\it E.\ coli}, we note that similar trends are observed for the human metabolic network.

If the irreversible reactions are made reversible [Fig.\ \ref{fig:numerical}(c,d)],  then the number of active reactions increases.
For states that minimize $\phi_{f}$  and $\phi_{m}$, the number of active reactions  jumps to a large number when $\alpha_i$ of the  irreversible reactions is assigned to be just slightly negative, and then decreases as the irreversibility constraints are further relaxed (Table \ref{tabl}).
For states that maximize $\phi_{f}$  and $\phi_{m}$, the increase in the number of active reactions is by a factor of nearly $2$ (Table \ref{tabl}).   This number  is comparable to the number of active reactions in typical suboptimal states of the original network.
Therefore, like in the case of linear objective functions, the reduced number of active reactions found in states that maximize the total flux or the total mass flow is due to the presence of irreversible reactions in the metabolic network.

{\small
\begin{table}[!ht]
\caption{Number of active reactions in states maximizing or minimizing the total flux, $\phi_{f}$, and the total mass flow, $\phi_{m}$.
The relaxation of the irreversibility constraints is implemented by allowing  $\alpha_i$ to be negative for all irreversible reactions, as indicated in the leftmost column.
Each column shows the average and standard deviation calculated over the growth rates considered in Fig.\ \ref{fig:numerical}.}
\begin{tabular}{l|c|c|c|c}
\hline
  & $\max \phi_f$  & $\min \phi_f$  & $ \max \phi_m$ & $\min \phi_m$ \\
\hline
\hline
{\small actual irreversibility} &   301.9 (2.6) & 292.3 (2.7) & 300.4 (2.7) & 292.1 (2.6)\\
\hline
irreversibility relaxed to $-10^{-3}$ & 580.5 (8.6)   & 506.8 (14.5) &   593.2 (8.1) &  501.1 (17.4)\\
irreversibility relaxed to $-10^{-2}$ & 592.7 (4.5)   & 502.2 (17.8)  &   597.9 (2.5) &  502.1 (17.7)  \\
irreversibility relaxed to $-10^{-1}$ & 595.4 (4.2)   & 469.4 (38.0) &   596.6 (2.2) &  475.8 (32.0) \\
irreversibility relaxed to $-1$           & 595.0 (4.0) & 370.1 (35.7) & 596.7 (3.5) & 404.0 (41.2)  \\
\hline
\end{tabular}
\begin{flushleft}
\end{flushleft}
\label{tabl}
\end{table}
}

All simulations  presented in this paper are based on a reconstructed metabolic network of  \emph{E.\ coli} K-12, which represents a further curated version of the iJR904 model  \cite{Reed2003-s} in which duplicated reactions have been removed, and on the most complete reconstructed human metabolic network, generated by applying the same curation to the \emph{Homo sapiens} Recon 1 model \cite{duarde2007}. The \emph{E.\ coli} (human) network used  consists of 922 (3328) reactions, 901 (1491) enzyme- and transport protein-coding genes, 618 (2766) metabolites, 143 (404) exchange fluxes, and the biomass flux. For the \emph{E.\ coli} network, the simulated medium  had limited amount of glucose (10 mmol/g DW-h) and oxygen (20 mmol/g DW-h), and unlimited amount of  sodium, potassium,  carbon dioxide, iron (II), protons,  water, ammonia, phosphate, and sulfate; the flux through the ATP maintenance reaction was set to 7.6 mmol/g DW-h. For the human network, we used a medium with limited amount of glucose (1 mmol/g DW-h) and unlimited amount of  oxygen, sodium, potassium, calcium, iron (II and III),  protons, water, ammonia, chlorine, phosphate, and sulfate; for the biomass composition, we followed Ref.\ \cite{shlomi2011}. In our simulations, $10^{-6}$ mmol/g DW-h  was used as a  flux threshold to define the set of reactions considered active. A few cycles whose flux or mass flow would diverge in the optimization of the corresponding objective function were assigned the  minimum feasible flux in the optimization of $\phi_{f}$ and the minimum feasible mass flow in the optimization of $\phi_{m}$ (under the constraint of not altering the fluxes of the other reactions).  These minimum flux values were also adopted as bounds in our hit-and-run sampling. All numerical calculations were implemented using the COBRA Toolbox \cite{cobra} and the CPLEX optimization software \cite{cplex}.

\section*{Acknowledgments}
This study was supported by the National Science Foundation under Grant DMS-1057128, the National
Cancer Institute under Grant 1U54CA143869-01, and a Sloan Research Fellowship to A.E.M.

\medskip
\medskip

\end{document}